\definecolor{green}{rgb}{0.2, 0.6, 0.2}
\definecolor{lightgreen}{rgb}{0.5, 0.9, 0.5}
\definecolor{lightred}{rgb}{1, 0.7, 0.7}
\definecolor{gray}{rgb}{0.87, 0.87, 0.87}
\newcommand{\R}[0]{\mathds{R}} % real numbers
\newcommand{\gauss}[2]{\mathcal{N}(#1,#2)}
\newcommand{\gaussBig}[2]{\mathcal{N}\left(#1,#2\right)}
\newcommand{\gaussx}[3]{\mathcal{N}(#1\,|\,#2,#3)}
\renewcommand{\d}{\operatorname{d}\!}
\newcommand{\E}{\mathds{E}} % expectation operator
\newcommand{\cov}[0]{\mathrm{cov}} % covariance
\renewcommand{\vec}[1]{{\boldsymbol{\mathbf{#1}}}} % vector
\newcommand{\mat}[1]{{\ensuremath{\mathbf{#1}}}} % matrix
\newcommand{\inv}[0]{^{-1}} % inverse (matrix)
\newcommand{\T}[0]{^\top} % transpose
\newcommand{\prob}{{p}} % probability density function
\newcommand{\obs}[0]{z}      % label for an observation/measurement
\newcommand{\meas}[1]{{#1}}
\newcommand{\idx}[1]{^{(#1)}}
\newcommand{\green}[1]{{\bf{\textcolor{green}{#1}}}}
\newcommand{\red}[1]{{\bf{\textcolor{red}{#1}}}}
\newtheorem{proposition}{Proposition}
\title{A Probabilistic Perspective on Gaussian Filtering and
  Smoothing\footnote{This paper is an extended version of the
    conference paper~\cite{Deisenroth2011}.}}
\author{Marc Peter Deisenroth$^{1,2}$ and Henrik Ohlsson$^3$\\[5mm]
  \small{$^1$Department of Computer Science \& Engineering}\\[-2mm]
  \small{University of Washington, Seattle, USA}\\[2mm]
  \small{$^2$Department of Engineering}\\[-2mm]
  \small{University of Cambridge, UK}\\[2mm]
  \small{$^3$Department of Electrical Engineering}\\[-2mm]
  \small{Link\"oping University, Sweden}}
\begin{document}

\maketitle

\begin{abstract}
 We present a general probabilistic perspective on Gaussian filtering
  and smoothing. This allows us to show that common approaches to
  Gaussian filtering\slash smoothing can be distinguished solely by
  their methods of computing\slash approximating the means and
  covariances of joint probabilities. This implies that novel filters
  and smoothers can be derived straightforwardly by providing methods
  for computing these moments. Based on this insight, we derive the
  cubature Kalman smoother and propose a novel robust filtering and
  smoothing algorithm based on Gibbs sampling.
\end{abstract}

%\section{Introduction}
Inference in latent variable models is about extracting
information about a not directly observable quantity, the latent
variable, from noisy observations. Both recursive and batch methods
are of interest and referred to as \emph{filtering} respective
\emph{smoothing}. Filtering and smoothing in latent variable time
series models, including hidden Markov models and dynamic systems,
have been playing an important role in signal processing, control, and
machine learning for decades~\cite{Kalman1960,Maybeck1979,Bishop2006}.
%Wiener and Kolmogorow introduced the concept of filtering in the early
%1940s. The interest for filtering boomed in the 1960s as the work of
%Kalman gained publicity. 

% some general stuff
In the context of dynamic systems, filtering is widely used in control
and robotics for online Bayesian state estimation~\cite{Thrun2005},
while smoothing is commonly used in machine learning algorithms for
parameter learning~\cite{Bishop2006}. For computational efficiency
reasons, many filters and smoothers approximate appearing probability
distributions by Gaussians. This is why they are referred to as
\emph{Gaussian filters/smoothers}.

% why we do this
In the following, we discuss Gaussian filtering and smoothing from a
general probabilistic perspective without focusing on particular
implementations. We identify the high-level concepts and the
components required for filtering and smoothing, while avoiding
getting lost in the implementation and computational details of
particular algorithms (see e.g., standard derivations of the Kalman
filter~\cite{Anderson2005,Thrun2005}).

% what's the point of this paper
We show that for Gaussian filters\slash smoothers for (non)linear
systems (including common algorithms such as the extended Kalman
filter (EKF)~\cite{Maybeck1979}, the cubature Kalman filter
(CKF)~\cite{Arasaratnam2009}, or the unscented Kalman filter
(UKF)~\cite{Julier2004}) can be distinguished by their means to
computing Gaussian approximations to the joint probability
distributions $\prob(\vec x_{t-1},\vec x_t|\vec\obs_{1:t-1})$ and
$\prob(\vec x_t,\vec\obs_t|\vec\obs_{1:t-1})$. 
Our results also imply that novel filtering and
smoothing algorithms can be derived straightforwardly, given a method
to determining the moments of these joint distributions. Using this
insight, we present and analyze the cubature Kalman smoother (CKS) and
a filter and an RTS smoother based on Gibbs sampling.

We start this paper by setting up the problem and the notation,
Sec.~\ref{sec:probform}.  We thereafter proceed by reviewing Gaussian
filtering and RTS smoothing from a high-level probabilistic
perspective to derive sufficient conditions for Gaussian filtering and
smoothing, respectively (Secs.~\ref{sec:filtering}
and~\ref{sec:smoothing}). The implications of this result are
discussed in Sec.~\ref{sec:results}, which lead to the derivation of a
novel Gaussian filter and RTS smoother based on Gibbs sampling.
Sec.~\ref{sec:numerical evaluation} provides proof-of-concept
numerical evaluations for the proposed method for both linear and
nonlinear systems. Secs.~\ref{sec:discussion}--\ref{sec:conclusion}
discuss related work and conclude the paper.

\section{Problem Setup and Notation}
\label{sec:probform}

We consider discrete-time stochastic dynamic systems of the form
\begin{align}
\vec x_t &= f(\vec x_{t-1}) + \vec w_{t}\,,\label{eq:system equation}\\
\vec\obs_t& = g(\vec x_t) + \vec v_t\,,\label{eq:measurement}
\end{align}
where $\vec x_t\in\R^D$ is the state, $\vec\obs_t\in\R^E$ is the
measurement at time step $t=1,\dotsc,T$, $\vec w_t\sim\gauss{\vec
  0}{\mat Q}$ is i.i.d. Gaussian system noise, $\vec
v_t\sim\gauss{\vec 0}{\mat R}$ is i.i.d. Gaussian measurement noise,
$f$ is the transition\slash system function and $g$ is the measurement
function.  The graphical model of the considered dynamic system is
given in fig.~\ref{fig:gm}.
\begin{figure}[tb]
\centering
\includegraphics[width = 0.7\hsize]{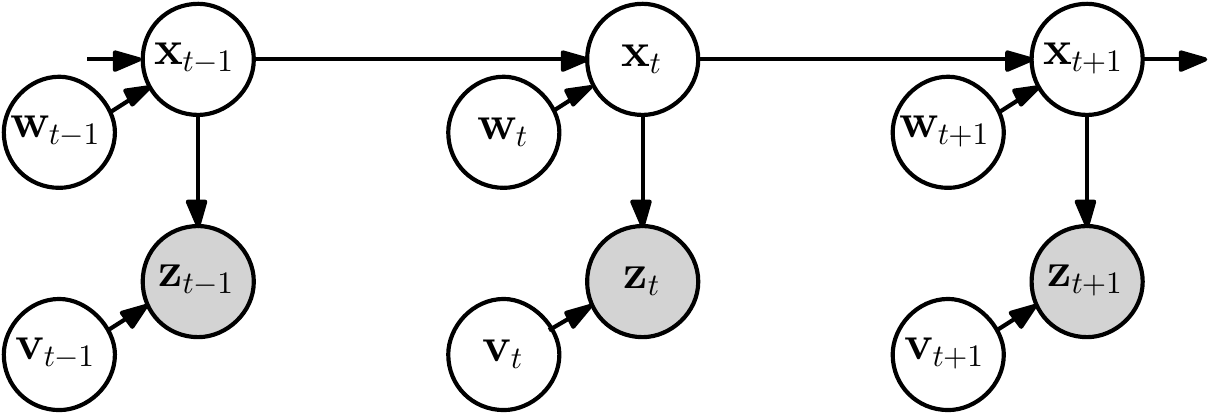}
\caption{Graphical model of the dynamic system. The shaded nodes are
  the measured variables $\vec\obs_t$, the unshaded nodes are
  unobserved variables. The arrows represent probabilistic
  dependencies between the variables.}
\label{fig:gm}
\end{figure}

The noise covariance matrices $\mat Q$, $\mat R$, the system function
$f$, and the measurement function $g$ are assumed known. If not stated
otherwise, we assume nonlinear functions $f$ and $g$. The initial
state $\vec x_0$ of the time series is distributed according to a
Gaussian prior distribution $\prob(\vec x_0) =
\gauss{\vec\mu_0^x}{\mat\Sigma_0^x}$.  The purpose of filtering and
smoothing is to find approximations to the posterior distributions
$\prob(\vec x_t|\vec\obs_{1:\tau})$, where a subscript $1\!:\!\tau$
abbreviates $1,\dotsc,\tau$, with $\tau\!=\!t$ for filtering and
$\tau\!=\!T$ for smoothing.

In this paper, we consider Gaussian approximations $\gaussx{\vec
  x_t}{\vec\mu_{t|\tau}^x}{\mat\Sigma_{t|\tau}^x}$ of the latent state
posteriors $\prob(\vec x_t|\vec\obs_{1:\tau})$. We use
the shorthand notation $\vec a_{b|c}^d$ where $\vec a=\vec\mu$ denotes
the mean $\vec\mu$ and $ \vec a = \mat\Sigma$ denotes the covariance,
$b$ denotes the time step under consideration, $c$ denotes the time
step up to which we consider measurements, and $d\in\{x,\obs\}$
denotes either the latent space ($x$) or the observed space ($\obs$).

% objective
Let us assume a prior $\prob(\vec x_0) = \prob(\vec x_0|\emptyset)$
and a sequence $\meas{\vec\obs}_1,\dotsc,\vec\obs_T$ of noisy
measurements of the latent states $\vec x_0,\dotsc,\vec x_T$ through
the measurement function $g$. The objective of \emph{filtering} is to
compute a posterior distribution $\prob(\vec
x_t|\meas{\vec\obs}_{1:t})$ over the latent state as soon as a new
measurement $\vec\obs_t$ is available. \emph{Smoothing} extends
filtering and aims to compute the posterior state distribution of the
hidden states $\vec x_t$, $t=0,\dotsc,T$, given \emph{all}
measurements $\vec\obs_1,\dotsc,\vec\obs_T$ (see
e.g.,~\cite{Anderson2005,Thrun2005}).

%%%%%%%%%%%%%%%%%%%%%%%%%%%%%%%%%%%%%%%%%%%%%%%%%%%%%%%%%%%%%%%%%%%%%%%%%%
\section{Gaussian Filtering}
\label{sec:filtering}
Given a prior $\prob(\vec x_0)$ on the initial state and a dynamic
system (e.g., Eqs.~(\ref{eq:system
  equation})--(\ref{eq:measurement})), the objective of
\emph{filtering} is to infer a posterior distribution $\prob(\vec
x_t|\vec\obs_{1:t})$ of the hidden state $\vec x_t$, $t = 1,\dotsc,T$,
incorporating the evidence of the measurements
$\vec\obs_{1:t}$. Specific for Gaussian filtering is that posterior
distributions are approximated by
Gaussians~\cite{Thrun2005}. Approximations are required since
generally a Gaussian distribution mapped through a nonlinear function
does not stay Gaussian.

Assume a Gaussian filter distribution $\prob(\vec
x_{t-1}|{\vec\obs}_{1:t-1})=\gauss{\vec\mu_{t-1|t-1}^x}{\mat\Sigma_{t-1|t-1}^x}$
is given (if not, we employ the prior $\prob(\vec x_0) = \prob(\vec
x_0|\emptyset)=\gauss{\vec\mu_{0|\emptyset}^x}{\mat\Sigma_{0|\emptyset}^x})$
on the initial state.  Using Bayes' theorem, the filter
distribution at time $t$ is
\begin{align}
  \prob(\vec x_t|{\vec\obs}_{1:t}) &= \frac{\prob(\vec
    x_t,\vec\obs_t|\vec\obs_{1:t-1})}{\prob(\vec\obs_t|
    \vec\obs_{1:t-1})}\propto
  \prob(\vec\obs_t|\vec x_t)\prob(\vec x_t|\vec\obs_{1:t-1})\,. 
\label{eq:Bayes theorem}
  % & = \prob(\vec\obs_t|\vec x_t)\int\prob(\vec x_t|\vec
  % x_{t-1})\prob(\vec x_{t-1}|\vec\obs_{1:t-1})\d\vec x_{t-1}\nonumber\,.
\end{align}

%
% proposition: how the filter update looks like
\begin{proposition}[Filter Distribution]\label{prop:filter
    distribution}
  Gaussian filters approximate the filter
  distribution $\prob(\vec
  x_t|\vec\obs_{1:t})$ using a Gaussian distribution
  $\gauss{\vec\mu_{t|t}^x}{\mat\Sigma_{t|t}^x}$. The moments of this
  approximation are in general computed through
\begin{align}
  \vec\mu_{t|t}^x&=\hat{\vec\mu}_{t|t-1}^x +
  \hat{ \mat\Sigma}_{t|t-1}^{x\obs}(\hat{\mat\Sigma}_{t|t-1}^\obs)\inv(\vec\obs_t -
  \hat{  \vec\mu}_{t|t-1}^\obs)\,,
  \label{eq:proposition filter mean}\\
  \mat\Sigma_{t|t}^x&= \hat{\mat\Sigma}_{t|t-1}^{x} - \hat{
    \mat\Sigma}_{t|t-1}^{x\obs}(\hat{
    \mat\Sigma}_{t|t-1}^\obs)\inv\hat{ \mat\Sigma}_{t|t-1}^{\obs x}\,.
\label{eq:proposition filter covariance}
\end{align}
Since the
true moments of the joint distribution  $\prob(\vec
x_t,\vec\obs_t|\vec \obs_{1:t-1})$ can in general not be computed
analytically, approximations\slash estimates  are used
(hence the $\hat{~}$-symbols).
\end{proposition}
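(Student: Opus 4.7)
The plan is to present the proposition as the direct consequence of two elementary steps: (i) rewriting the filter posterior as a conditional of the joint $\prob(\vec x_t,\vec\obs_t|\vec\obs_{1:t-1})$, and (ii) applying the standard Gaussian conditioning formula to a Gaussian approximation of that joint. First I would observe that Bayes' theorem in Eq.~(\ref{eq:Bayes theorem}) is equivalent to $\prob(\vec x_t|\vec\obs_{1:t}) = \prob(\vec x_t|\vec\obs_t,\vec\obs_{1:t-1})$, so the filter distribution is literally the conditional of that joint at the observed value $\vec\obs_t$. This reframing matters because it isolates the single place where a Gaussian assumption must enter.

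Next I would replace the (generally intractable) true joint by a Gaussian approximation whose mean consists of the stacked marginal means $\hat{\vec\mu}^x_{t|t-1}$ and $\hat{\vec\mu}^\obs_{t|t-1}$, and whose covariance has the diagonal blocks $\hat{\mat\Sigma}^x_{t|t-1},\hat{\mat\Sigma}^\obs_{t|t-1}$ and off-diagonal block $\hat{\mat\Sigma}^{x\obs}_{t|t-1}$. The hats flag that for nonlinear $f,g$ these exact moments are not available; different Gaussian filters (EKF, UKF, CKF, and the Gibbs-based variant introduced later) amount to different recipes for producing these five estimated moments. Applying the standard Gaussian conditioning identity — if $(\vec a,\vec b)$ is jointly Gaussian then $\vec a|\vec b$ is Gaussian with mean $\vec\mu_a+\mat\Sigma_{ab}\mat\Sigma_b\inv(\vec b-\vec\mu_b)$ and covariance $\mat\Sigma_a-\mat\Sigma_{ab}\mat\Sigma_b\inv\mat\Sigma_{ba}$ — with $\vec a=\vec x_t$ and $\vec b=\vec\obs_t$ immediately yields Eqs.~(\ref{eq:proposition filter mean})--(\ref{eq:proposition filter covariance}).

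There is no genuine obstacle: once the joint is approximated by a Gaussian, the update equations are forced by Gaussian conditioning. The one point I would be careful to emphasize in the write-up is exactly where the exact equality of Bayes' theorem is replaced by the Gaussian assumption, since that is what simultaneously (a) turns the proposition into a structural statement — the hatted joint moments are the \emph{only} algorithmic degrees of freedom shared by all Gaussian filters — and (b) accounts for their unavoidable approximation error, a perspective the rest of the paper will exploit to derive new filters and smoothers.
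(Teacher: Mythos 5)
Your proposal is correct and follows essentially the same route as the paper: both reduce the filter update to Gaussian conditioning of $\vec x_t$ on $\vec\obs_t$ in a Gaussian approximation of the joint $\prob(\vec x_t,\vec\obs_t|\vec\obs_{1:t-1})$, with the hatted quantities standing for whatever moment estimates a particular filter supplies. The only difference is that the paper additionally spells out the exact integral expressions for those joint moments (time update, predicted-measurement marginal, cross-covariance), which your argument correctly treats as algorithm-specific inputs rather than as part of the proof of the update equations themselves.
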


\begin{proof}
  % predictor-corrector
  Generally, filtering proceeds by alternating between predicting
  (\emph{time update}) and correcting (\emph{measurement
    update})~\cite{Anderson2005,Thrun2005}:
\begin{enumerate}
\item Time update (predictor)
\begin{enumerate}
\item Compute the predictive distribution $\prob(\vec
  x_t|\vec\obs_{1:t-1})$.
\end{enumerate}
\item Measurement update (corrector) 
\begin{enumerate}
\item Compute the joint distribution $\prob(\vec
  x_t,\vec\obs_t|{\vec\obs}_{1:t-1})$ of the next latent state and the
  next measurement.
\item Measure $\vec\obs_t$.
\item Compute the posterior $\prob(\vec x_t|{\vec\obs}_{1:t})$.
\end{enumerate}
\end{enumerate}
In the following, we detail these steps to prove
Prop.~\ref{prop:filter distribution}.

\subsection{Time Update (Predictor)}
\begin{enumerate}
\item[(a)] Compute the predictive distribution $\prob(\vec
  x_t|\vec\obs_{1:t-1})$. The predictive distribution of state $\vec
  x$ at time $t$ given the evidence of measurements up to time $t-1$
  is 
\begin{align}
\prob(\vec x_t|\vec\obs_{1:t-1}) \!= \! \int\!\prob(\vec x_t|\vec
x_{t-1})\prob(\vec x_{t-1}|\vec\obs_{1:t-1}) \d\vec
x_{t-1}\,,\label{eq:time update}
\end{align}
where $\prob(\vec x_t|\vec x_{t-1}) = \gaussx{\vec x_t}{f(\vec
  x_{t-1})}{\mat Q}$ is the transition probability. In Gaussian
filters, the predictive distribution $\prob(\vec
x_t|\vec\obs_{1:t-1})$ in Eq.~\eqref{eq:time update} is approximated
by a Gaussian distribution, whose exact mean and covariance are given by
\begin{align}
  \vec\mu_{t|t-1}^x&\!\coloneqq\!\E_{\vec x_t}[\vec
  x_t|\vec\obs_{1:t-1}]\! =\! \E_{\vec x_{t-1},\vec w_t}[f(\vec
  x_{t-1})\!+\!\vec w_t|\vec\obs_{1:t-1}] =\int f(\vec
  x_{t-1})\prob(\vec x_{t-1}|\vec\obs_{1:t-1})\d\vec x_{t-1}\,,
  \label{eq:mean state predicted measurement}\\
  \mat\Sigma_{t|t-1}^x &\coloneqq\cov_{\vec x_t}[\vec
  x_t|\vec\obs_{1:t-1}] = \cov_{\vec x_{t-1}}[f(\vec
  x_{t-1})|\vec\obs_{1:t-1}] + \cov_{\vec w_t}[\vec w_t] \nonumber \\
&=\underbrace{\int f(\vec x_{t-1})f(\vec x_{t-1})\T\prob(\vec
    x_{t-1}|\vec\obs_{1:t-1})\d\vec
    x_{t-1}-\vec\mu_{t|t-1}^x\big(\vec\mu_{t|t-1}^x)\T}_{ =
    \cov_{\vec x_{t-1}}[f(\vec x_{t-1})|\vec\obs_{1:t-1}]} +\underbrace{\mat
    Q}_{ =\cov_{\vec w_t}[\vec w_t]}
  \label{eq:cov state predicted measurement}
\end{align}
respectively. In Eq.~\eqref{eq:mean state predicted measurement}, we
exploited that the noise term $\vec w_t$ in Eq.~\eqref{eq:system
  equation} has mean zero and is independent.  A Gaussian
approximation to the time update $\prob(\vec x_t|\vec\obs_{1:t-1})$ is
then given by $\gaussx{\vec
  x_t}{\vec\mu_{t|t-1}^x}{\mat\Sigma_{t|t-1}^x}$.
\end{enumerate}

\subsection{Measurement Update (Corrector)}
\begin{enumerate}
\item[(a)] Compute the joint distribution 
\begin{align}
\prob(\vec
  x_t,\vec\obs_t|\vec\obs_{1:t-1})=\prob(\vec\obs_t|\vec
  x_t)\underbrace{\prob(\vec x_t|\vec\obs_{1:t-1})}_{\text{time update}}\,.
\label{eq:1st joint}
\end{align}
In Gaussian filters, a Gaussian approximation to this joint is an
intermediate step toward the desired Gaussian approximation of the
posterior $\prob(\vec x_t|\vec\obs_{1:t})$. If the mean and the
covariance of the joint in Eq.~(\ref{eq:1st joint}) can be computed or
estimated, the desired filter distribution corresponds to the
conditional $\prob(\vec x_t|\vec\obs_{1:t})$ and is given in closed
form~\cite{Bishop2006}.

Our objective is to compute a Gaussian approximation
\begin{align}
\gaussBig{
\begin{bmatrix}
\vec\mu_{t|t-1}^x\\
\vec\mu_{t|t-1}^\obs
\end{bmatrix}
}{
\begin{bmatrix}
\mat\Sigma_{t|t-1}^x & \mat\Sigma_{t|t-1}^{x\obs}\\
\mat\Sigma_{t|t-1}^{\obs x} & \mat\Sigma_{t|t-1}^\obs
\end{bmatrix}
}
\label{eq:joint p(x,z)}
\end{align}
to the joint $\prob(\vec x_t,\vec\obs_t|\vec\obs_{1:t-1})$ in
Eq.~(\ref{eq:1st joint}). Since a Gaussian approximation
$\gauss{\vec\mu_{t|t-1}^x}{\mat\Sigma_{t|t-1}^x}$ to the marginal
$\prob(\vec x_t|\vec\obs_{1:t-1})$ is known from the time update, %see
%Eq.~(\ref{eq:time update gaussian}), 
it remains to compute the
marginal $\prob(\vec\obs_t|\vec\obs_{1:t-1})$ and the
cross-covariance $\mat\Sigma_{t|t-1}^{x\obs}\coloneqq \cov_{\vec
  x_t,\vec\obs_t}[\vec x_t,\vec\obs_t|\vec\obs_{1:t-1}]$.
\begin{itemize}
\item The marginal $\prob(\vec
  z_t|{\vec\obs}_{1:t-1})$ of the joint in Eq.~\eqref{eq:joint
    p(x,z)} is
\begin{align}
\prob(\vec\obs_t|\vec\obs_{1:t-1}) =  \int\prob(\vec\obs_t|\vec
x_t)\prob(\vec x_t|\vec\obs_{1:t-1}) \d\vec x_t\,,\nonumber
\end{align}
where the state $\vec x_t$ is integrated out according to the time
update $\prob(\vec x_t|\vec\obs_{1:t-1})$. 
The measurement Eq.~(\ref{eq:measurement}),
yields $\prob(\vec\obs_t|\vec x_t)=\gauss{g(\vec x_t)}{\mat R}$.
Hence, the exact mean of the marginal is
\begin{align}
  \vec\mu_{t|t-1}^\obs&\coloneqq
  \E_{\vec\obs_t}[\vec\obs_t|\vec\obs_{1:t-1}]= \E_{\vec x_t}[g(\vec
  x_t)|\vec\obs_{1:t-1}] =\int g(\vec x_t)\underbrace{\prob(\vec
  x_t|\vec\obs_{1:t-1})}_{\text{time update}}\d\vec x_t
\label{eq:mean predicted measurement}
\end{align}
since the noise term $\vec v_t$ in the measurement
Eq.~(\ref{eq:measurement}) is independent and has zero
mean. Similarly, the exact covariance of the marginal
$\prob(\vec\obs_t|\vec\obs_{1:t-1})$ is
\begin{align}
  \mat\Sigma_{t|t-1}^\obs &=
  \cov_{\vec\obs_t}[\vec\obs_t|\vec\obs_{1:t-1} = \cov_{\vec
    x_t}[g(\vec x_t)|\vec\obs_{1:t-1}]+ \cov_{\vec
    v_t}[\vec v_t]\nonumber \\
 & = \underbrace{\int g(\vec x_t)g(\vec x_t)\T\prob(\vec
    x_t|\vec\obs_{1:t-1})\d\vec
    x_t-\vec\mu_{t|t-1}^\obs\big(\vec\mu_{t|t-1}^\obs)\T}_{=\cov_{\vec
    x_t}[g(\vec
    x_t)|\vec\obs_{1:t-1}]} +\underbrace{\mat
    R}_{=\cov_{\vec v_t}[\vec v_t]}\,.
\label{eq:covariance predicted measurement}
\end{align}
Hence, a Gaussian approximation to the marginal measurement
distribution $\prob(\vec\obs_t|\vec\obs_{1:t-1})$ is given by
\begin{align}
    \gaussx{\vec\obs_t}{\vec\mu_{t|t-1}^\obs}{\mat\Sigma_{t|t-1}^\obs}\,,
\label{eq:measurement marginal}
\end{align}
with the mean and covariance given in Eqs.~(\ref{eq:mean
  predicted measurement}) and~(\ref{eq:covariance predicted
  measurement}), respectively.

\item Due to the independence of $\vec v_t$, the exact
  cross-covariance terms of the joint in Eq.~\eqref{eq:joint p(x,z)}
  are
\begin{align}
  \mat\Sigma_{t|t-1}^{x\obs}&=\cov_{\vec x_t,\vec\obs_t}[\vec
  x_t,\vec\obs_t|\vec\obs_{1:t-1}] =\E_{\vec x_t,\vec z_t}[\vec
  x_t\vec\obs_t\T|\vec\obs_{1:t-1}]-\E_{\vec x_t}[\vec
   x_t|\vec\obs_{1:t-1}]\E_{\vec\obs_t}[\vec\obs_t|\vec\obs_{1:t-1}]\T\nonumber\\
  &=\iint\vec x_t\vec\obs_t\T\prob(\vec x_t,\vec\obs_t|\vec\obs_{1:t-1})\d\vec
  \obs_t\d\vec x_t-\vec\mu_{t|t-1}^x(\vec\mu_{t|t-1}^\obs)\T\nonumber\,.
\end{align}
Plugging in the measurement Eq.~\eqref{eq:measurement}, we obtain
\begin{align}
  \mat\Sigma_{t|t-1}^{x\obs} & = \int\vec x_tg(\vec x_t)\T\prob(\vec
  x_t|\vec\obs_{1:t-1})\d\vec
  x_t-\vec\mu_{t|t-1}^x(\vec\mu_{t|t-1}^\obs)\T\,.
  \label{eq:cross-covariance x and z}
\end{align}
\end{itemize}
\item[(b)] Measure $\vec\obs_t$.
\item[(c)] Compute a Gaussian approximation of the posterior
  $\prob(\vec x_t|\vec\obs_{1:t})$. This boils down to computing a
  conditional from the Gaussian approximation to the joint
  distribution $\prob(\vec x_t,\vec\obs_t|\vec\obs_{1:t-1})$ in
  Eq.~\eqref{eq:joint p(x,z)}. The expressions from Eqs.~(\ref{eq:mean
    state predicted measurement}),~\eqref{eq:cov state predicted
    measurement},~(\ref{eq:mean predicted
    measurement}),~(\ref{eq:covariance predicted measurement}),
  and~(\ref{eq:cross-covariance x and z}), yield a Gaussian
  approximation $\gaussx{\vec
    x_t}{\vec\mu_{t|t}^x}{\mat\Sigma_{t|t}^x}$ of the filter
  distribution $\prob(\vec x_t|\vec\obs_{1:t})$, where
\begin{align}
 \vec\mu_{t|t}^x &= \vec\mu_{t|t-1}^x +
  \mat\Sigma_{t|t-1}^{x\obs}\big(\mat\Sigma_{t|t-1}^\obs\big)\inv
  (\vec\obs_t-\vec\mu_{t|t-1}^\obs)\,,\label{eq:generic filter mean}\\     
 \mat\Sigma_{t|t}^x &= \mat\Sigma_{t|t-1}^x -
  \mat\Sigma_{t|t-1}^{x\obs}\big(\mat\Sigma_{t|t-1}^\obs\big)\inv
  \mat\Sigma_{t|t-1}^{\obs x}\,.
\label{eq:generic filter covariance}
\end{align}

\end{enumerate}
Generally, the required integrals in Eqs.~\eqref{eq:mean state
  predicted measurement}, \eqref{eq:cov state predicted measurement},
\eqref{eq:mean predicted measurement}, \eqref{eq:covariance predicted
  measurement}, and \eqref{eq:cross-covariance x and z} cannot be
computed analytically. Hence, approximations of the moments are
typically used in Eqs.~\eqref{eq:generic filter mean} and
\eqref{eq:generic filter covariance}. This concludes the proof of
Prop.~\ref{prop:filter distribution}.
\end{proof}

%%%%%%%%%%%%%%%%%%%%%%%%%%%%%%%%%%%%%%%%%%%%%%%%%%%%%%%%%%%%%%%%%%%%%%%%%%

\subsection{Sufficient Conditions for Gaussian Filtering}

In any Bayes filter~\cite{Thrun2005}, the sufficient components to
computing the Gaussian filter distribution in Eqs.~(\ref{eq:generic
  filter mean}) and~(\ref{eq:generic filter covariance}) are the mean
and the covariance of the joint distribution $\prob(\vec
x_t,\vec\obs_t|\vec \obs_{1:t-1})$.
Generally, the required integrals in Eqs.~\eqref{eq:mean state
  predicted measurement}, \eqref{eq:cov state predicted measurement},
\eqref{eq:mean predicted measurement}, \eqref{eq:covariance predicted
  measurement}, and \eqref{eq:cross-covariance x and z} cannot be
computed analytically. One exception are \emph{linear} functions $f$
and $g$, where the analytic solutions to the integrals are embodied in
the Kalman filter~\cite{Kalman1960}: Using the rules
of predicting in linear Gaussian systems, the Kalman filter equations
can be recovered when plugging in the respective means and covariances
into Eq.~(\ref{eq:generic filter mean}) and~(\ref{eq:generic filter
  covariance})~\cite{Roweis1999, Minka1998, Anderson2005, Bishop2006,
  Thrun2005}. 
In many \emph{nonlinear} dynamic systems, filtering algorithms
approximate probability distributions (see e.g., the
UKF~\cite{Julier2004} and the CKF~\cite{Arasaratnam2009}) or the
functions $f$ and $g$ (see e.g., the EKF~\cite{Maybeck1979} or the
GP-Bayes filters~\cite{Deisenroth2009a, Ko2009}). Using the means and
(cross-)covariances computed by these algorithms and plugging them
into Eqs.~(\ref{eq:generic filter mean})--(\ref{eq:generic filter
  covariance}), recovers the corresponding filter update equations for
the EKF, the UKF, the CKF, and the GP-Bayes filters.

%%%%%%%%%%%%%%%%%%%%%%%%%%%%%%%%%%%%%%%%%%%%%%%%%%%%%%%%%%%%%%%%%%%

\section{Gaussian RTS Smoothing}
\label{sec:smoothing}
In this section, we present a general probabilistic perspective on
Gaussian RTS smoothers and derive sufficient conditions for Gaussian
smoothing.

The smoothed state distribution is the posterior distribution of the
hidden state given \emph{all} measurements
\begin{equation}
\prob(\vec x_t|\vec\obs_{1:T})\,,\,\,t=T,\dotsc,0\,.
\end{equation}

\begin{proposition}[Smoothing Distribution]\label{prop:smoothing}
  For Gaussian smoothers, the mean and the covariance of a Gaussian
  approximation to the distribution $\prob(\vec x_t|\vec\obs_{1:T})$
  are generally computed as
\begin{align}
  \vec\mu_{t-1|T}^x&=\hat{\vec\mu}_{t-1|t-1}^x + \hat{\mat
    J}_{t-1}(\hat{\vec\mu}_{t|T}^x - \hat{\vec\mu}_{t|t-1}^x)\,,
  \label{eq:proposition mean smoother}\\
  \mat \Sigma_{t-1|T}^x &=\hat{\mat \Sigma}_{t-1|t-1}^x + \hat{\mat
    J}_{t-1} (\hat{\mat \Sigma}_{t|T}^x - \hat{\mat
    \Sigma}_{t|t-1}^x)\hat{\mat J}_{t-1}\T\,,
  \label{eq:proposition cov smoother}\\
  \mat J_{t-1} &= \cov[\vec x_{t-1},\vec
  x_t|\vec\obs_{1:t-1}]\cov[\vec
  x_t|\vec\obs_{1:t-1}]\inv\nonumber\\
  &=\mat\Sigma_{t-1,t|t-1}^x(\mat\Sigma_{t|t-1}^x)\inv\,.
\end{align}
\end{proposition}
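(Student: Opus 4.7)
The plan is to set up a backward recursion in $t$ and base it on the Markov property of the model in Fig.~\ref{fig:gm}, namely that given $\vec x_t$ the earlier state $\vec x_{t-1}$ is conditionally independent of the future observations $\vec\obs_{t:T}$. This yields the identity
\begin{align*}
\prob(\vec x_{t-1}|\vec\obs_{1:T}) = \int \prob(\vec x_{t-1}|\vec x_t,\vec\obs_{1:t-1})\,\prob(\vec x_t|\vec\obs_{1:T})\,\d\vec x_t\,,
\end{align*}
which already expresses the smoother at $t-1$ in terms of the smoother at $t$ and a conditional that only involves the forward pass. The recursion is initialized at $t=T$ by the filter distribution $\prob(\vec x_T|\vec\obs_{1:T})$ from Prop.~\ref{prop:filter distribution}.

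Next I would obtain the conditional $\prob(\vec x_{t-1}|\vec x_t,\vec\obs_{1:t-1})$ by first forming a Gaussian approximation to the joint $\prob(\vec x_{t-1},\vec x_t|\vec\obs_{1:t-1})$. The marginal means and covariances $\vec\mu_{t-1|t-1}^x,\mat\Sigma_{t-1|t-1}^x$ are given by the previous filter step and $\vec\mu_{t|t-1}^x,\mat\Sigma_{t|t-1}^x$ by the time update in Eqs.~\eqref{eq:mean state predicted measurement}--\eqref{eq:cov state predicted measurement}, leaving only the cross-covariance $\mat\Sigma_{t-1,t|t-1}^x$ to be computed (as an expectation of $\vec x_{t-1}f(\vec x_{t-1})\T$ under the previous filter). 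Applying the Gaussian conditioning formula then gives
\begin{align*}
\E[\vec x_{t-1}|\vec x_t,\vec\obs_{1:t-1}] &= \vec\mu_{t-1|t-1}^x + \mat J_{t-1}(\vec x_t - \vec\mu_{t|t-1}^x)\,,\\
\cov[\vec x_{t-1}|\vec x_t,\vec\obs_{1:t-1}] &= \mat\Sigma_{t-1|t-1}^x - \mat J_{t-1}\mat\Sigma_{t|t-1}^x\mat J_{t-1}\T\,,
\end{align*}
with $\mat J_{t-1}=\mat\Sigma_{t-1,t|t-1}^x(\mat\Sigma_{t|t-1}^x)\inv$ as claimed, and crucially the conditional covariance does not depend on $\vec x_t$.

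To finish, I would apply the tower property and the law of total covariance with respect to $\vec x_t\sim\prob(\vec x_t|\vec\obs_{1:T})$. For the mean, linearity gives
\begin{align*}
\vec\mu_{t-1|T}^x = \vec\mu_{t-1|t-1}^x + \mat J_{t-1}(\vec\mu_{t|T}^x - \vec\mu_{t|t-1}^x)\,,
\end{align*}
matching \eqref{eq:proposition mean smoother}. For the covariance, the first term of $\cov[\,\E[\vec x_{t-1}|\vec x_t,\vec\obs_{1:t-1}]\,]+\E[\,\cov[\vec x_{t-1}|\vec x_t,\vec\obs_{1:t-1}]\,]$ evaluates to $\mat J_{t-1}\mat\Sigma_{t|T}^x\mat J_{t-1}\T$ by the linearity of the conditional mean in $\vec x_t$, while the second term is the (constant) conditional covariance above; adding and simplifying using $\mat J_{t-1}\mat\Sigma_{t|t-1}^x\mat J_{t-1}\T=\mat\Sigma_{t-1,t|t-1}^x(\mat\Sigma_{t|t-1}^x)\inv\mat\Sigma_{t,t-1|t-1}^x$ collapses to \eqref{eq:proposition cov smoother}.

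The main obstacle is conceptual rather than computational: the joint $\prob(\vec x_{t-1},\vec x_t|\vec\obs_{1:t-1})$ is in general non-Gaussian for nonlinear $f$, so the Gaussian conditioning formula is only approximate and the cross-covariance $\mat\Sigma_{t-1,t|t-1}^x$ cannot be computed in closed form. This is why the moments carry hats (in analogy with Prop.~\ref{prop:filter distribution}); any concrete smoother is distinguished purely by \emph{how} it estimates the joint means and covariances $\vec\mu_{t-1|t-1}^x,\vec\mu_{t|t-1}^x,\mat\Sigma_{t-1|t-1}^x,\mat\Sigma_{t|t-1}^x,\mat\Sigma_{t-1,t|t-1}^x$, which is precisely the message the paper wishes to deliver.
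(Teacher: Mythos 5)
Your proposal is correct and follows the same overall skeleton as the paper's proof: the backward recursion $\prob(\vec x_{t-1}|\vec\obs_{1:T})=\int\prob(\vec x_{t-1}|\vec x_t,\vec\obs_{1:t-1})\prob(\vec x_t|\vec\obs_{1:T})\d\vec x_t$ via the conditional-independence argument, the construction of the joint Gaussian approximation to $\prob(\vec x_{t-1},\vec x_t|\vec\obs_{1:t-1})$ from the filter marginal, the time update, and the cross-covariance $\mat\Sigma_{t-1,t|t-1}^x$, and then Gaussian conditioning with the gain $\mat J_{t-1}=\mat\Sigma_{t-1,t|t-1}^x(\mat\Sigma_{t|t-1}^x)\inv$. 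Where you diverge is in the final marginalization over $\vec x_t$: the paper rewrites the conditional $\gaussx{\vec x_{t-1}}{\vec m}{\mat S}$ as an unnormalized Gaussian in $\vec x_t$ (which formally requires the pseudo-inverse of $\mat J_{t-1}$), multiplies two Gaussian densities, and tracks the normalization constants $c_1$ and $c_2$ until they cancel; you instead apply the tower property and the law of total covariance under $\vec x_t\sim\gauss{\vec\mu_{t|T}^x}{\mat\Sigma_{t|T}^x}$, using that the conditional mean is affine in $\vec x_t$ and the conditional covariance is constant. Your route is the cleaner one: it is the standard marginalization of a linear-Gaussian model, it sidesteps the invertibility of $\mat J_{t-1}$ entirely, and it makes transparent why $\mat\Sigma_{t-1|T}^x=\mat\Sigma_{t-1|t-1}^x+\mat J_{t-1}(\mat\Sigma_{t|T}^x-\mat\Sigma_{t|t-1}^x)\mat J_{t-1}\T$ via the identity $\mat J_{t-1}\mat\Sigma_{t|t-1}^x\mat J_{t-1}\T=\mat J_{t-1}(\mat\Sigma_{t-1,t|t-1}^x)\T$. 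The paper's density-algebra derivation buys nothing extra here beyond exhibiting the intermediate joint $\prob(\vec x_{t-1},\vec x_t|\vec\obs_{1:T})$ explicitly; both arguments deliver the same moments and the same concluding message about the hatted quantities.
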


% start proof
\begin{proof}
  The smoothed state distribution at the terminal time step $T$ is
  equivalent to the filter distribution $\prob(\vec
  x_T|\vec\obs_{1:T})$~\cite{Anderson2005, Bishop2006}. The
  distributions $\prob(\vec x_{t-1}|\vec\obs_{1:T})$, $t =
  T,\dotsc,1$, of the smoothed states can be computed recursively
  according to
\begin{align}
  \prob(\vec x_{t-1}|\vec\obs_{1:T}) &=\int\prob(\vec x_{t-1}|\vec
  x_t,\vec\obs_{1:T})\prob(\vec x_t|\vec\obs_{1:T})\d\vec
  x_t =\int\prob(\vec x_{t-1}|\vec x_t,\vec\obs_{1:t-1})\prob(\vec
  x_t|\vec\obs_{1:T})\d\vec x_t
\label{eq:smoothing recursion}
\end{align}
by integrating out the smoothed hidden state at time step $t$. In
Eq.~\eqref{eq:smoothing recursion}, we exploited that $\vec x_{t-1}$
is conditionally independent of the future measurements $\vec
\obs_{t:T}$ given $\vec x_t$.

% problem: multiply p(x_t)*p(x_t-1) -> mixture property
To compute the smoothed state distribution in
Eq.~(\ref{eq:smoothing recursion}), we need to multiply a
distribution in $\vec x_t$ with a distribution in $\vec x_{t-1}$ and
integrate over $\vec x_t$. To do so, we follow the steps:
\begin{enumerate}
\item[(a)] Compute the conditional $\prob(\vec x_{t-1}|\vec
  x_t,\vec\obs_{1:t-1})$.
\item[(b)] Formulate $\prob(\vec x_{t-1}|\vec x_t,\vec\obs_{1:T})$ as
  an unnormalized distribution in $\vec x_t$.
\item[(c)] Multiply the new distribution with $\prob(\vec x_t|\vec\obs_{1:T})$.
\item[(d)] Solve the integral in Eq.~\eqref{eq:smoothing recursion}.
\end{enumerate}
We now examine these steps in detail. Assume a known (Gaussian)
smoothed state distribution $\prob(\vec x_t|\vec\obs_{1:T})$.
%% how to solve the problem, in more detail
\begin{enumerate}
\item[(a)]
% conditional
  Compute a Gaussian approximation to the conditional $\prob(\vec
  x_{t-1}|\vec x_t,\vec\obs_{1:t-1})$. We compute the conditional in
  two steps: First, we compute a Gaussian approximation to the joint  distribution
  $\prob(\vec x_t,\vec x_{t-1}|\vec\obs_{1:t-1})$. Second, we apply
  the rules of computing conditionals to this joint Gaussian.
%
% joint distribution
Let us start with a Gaussian approximation
\begin{align}
  \gaussBig{
\begin{bmatrix}\vec\mu_{t-1|t-1}^x\\
      \vec\mu_{t|t-1}^x\end{bmatrix}} {\begin{bmatrix}\mat
      \Sigma_{t-1|t-1}^x & \mat\Sigma_{t-1,t|t-1}^x\\
      (\mat\Sigma_{t-1,t|t-1}^x)\T & \mat
      \Sigma_{t|t-1}^x\end{bmatrix}}
\label{eq:p(x_{t-1},x_t|Y)}
\end{align}
to the joint $\prob(\vec x_{t-1},\vec x_t|\vec\obs_{1:t-1})$ and have
a closer look at its components: A Gaussian approximation of the
filter distribution $\prob(\vec x_{t-1}|\vec\obs_{1:t-1})$ at time
step $t-1$ is known and is the first marginal distribution in
Eq.~(\ref{eq:p(x_{t-1},x_t|Y)}). The second marginal $\gauss{\vec
  \mu_{t|t-1}^x}{\mat \Sigma_{t|t-1}^x}$ is the time update and also
known from filtering.
% cross-covariance
To fully determine the joint in Eq.~(\ref{eq:p(x_{t-1},x_t|Y)}), we
require the cross-covariance matrix 
\begin{align}
  \mat \Sigma_{t-1,t|t-1}^x & =\iint \vec x_{t-1}f(\vec
  x_{t-1})\T\prob(\vec x_{t-1}|\vec\obs_{1:t-1})\d\vec x_{t-1} 
  - \vec\mu_{t-1|t-1}^x(\vec\mu_{t|t-1}^x)\T\,,
\label{eq:cross-covariance smoothing}
\end{align}
where we used the means $\vec\mu_{t-1|t-1}^x$ and $\vec\mu_{t|t-1}^x$
of the measurement update and the time update, respectively. The
zero-mean independent noise in the system Eq.~(\ref{eq:system
  equation}) does not influence the cross-covariance matrix.  The
cross-covariance matrix in Eq.~(\ref{eq:cross-covariance smoothing})
can be pre-computed during filtering since it does not depend on
future measurements.

This concludes the first step (computation of the joint Gaussian) of
the computation of the desired conditional.

% now: conditional Gaussian from joint
In the second step, we apply the rules of Gaussian conditioning to
obtain the desired conditional distribution $\prob(\vec x_{t-1}|\vec
x_t,\vec\obs_{1:t-1})$. For a shorthand notation, we define
\begin{equation}\label{eq:J-matrix smoothing}
\mat J_{t-1}\coloneqq \mat \Sigma_{t-1,t|t-1}^x(\mat
\Sigma_{t|t-1}^x)\inv \,,
\end{equation}
and obtain a Gaussian approximation $\gaussx{\vec x_{t-1}}{\vec
    m}{\mat S}$ of  the conditional distribution  $\prob(\vec x_{t-1} |\vec
x_t,\vec\obs_{1:t-1})$ with
\begin{align}
  \vec m &= \vec\mu_{t-1|t-1}^x+\mat J_{t-1}(\vec x_t -
  \vec\mu_{t|t-1}^x)\,,\\
  \mat S &= \mat \Sigma_{t-1|t-1}^x - \mat J_{t-1}(\mat
  \Sigma_{t-1,t|t-1}^x)\T\,.
\end{align}

% p(x_t-1|x_t) as a distribution in x_t
\item[(b)] Formulate $\gaussx{\vec x_{t-1}}{\vec m}{\mat S}$ as an
  unnormalized distribution in $\vec x_t$. The square-root of the
  exponent of $\gaussx{\vec x_{t-1}}{\vec m}{\mat S}$ contains
\begin{align}
  \vec x_{t-1} -\vec m &= \vec r(\vec x_{t-1}) -
  \mat J_{t-1}\vec x_t\nonumber
\end{align}
with $\vec r(\vec x_{t-1})= \vec x_{t-1} - \vec\mu_{t-1|t-1}^x+\mat
J_{t-1}\vec\mu_{t|t-1}^x$, which is a linear function of both $\vec
x_{t-1}$ and $\vec x_t$. We now reformulate the conditional Gaussian
$\gaussx{\vec x_{t-1}}{\vec m}{\mat S}$ as a Gaussian in $\mat J_{t-1}
\vec x_t$ with mean $\vec r(\vec x_{t-1})$ and the unchanged
covariance matrix $\mat S$. We obtain the conditional
\begin{align}
\gaussx{\vec x_{t-1}}{\vec m}{\mat S}
&=c_1\gaussx{\vec x_t}{\vec a}{\mat
    A}\,,\label{eq:N(Rx) is N(x)}\\ \text{with} \quad
  c_1&=\sqrt{|2\pi(\mat J_{t-1}\T\mat S\inv\mat
      J_{t-1})\inv|/|2\pi \mat S|}\,,\nonumber
\end{align}
and $\vec a=\mat J_{t-1}\inv\vec r(\vec x_{t-1})\,, \mat A = (\mat
J_{t-1}\T\mat S\inv\mat J_{t-1})\inv$. Note that $\gaussx{\vec
  x_{t-1}}{\vec m}{\mat S}$ is an unnormalized Gaussian in $\vec x_t$,
see Eq.~\eqref{eq:N(Rx) is N(x)}. The matrix $\mat J_{t-1}$
defined in Eq.~(\ref{eq:J-matrix smoothing}) is quadratic, but not
necessarily invertible, in which case we take the
pseudo-inverse. However, we will see that this inversion will
be unnecessary to obtain the final result.

% density multiplication
\item[(c)] Multiply the new distribution with $\prob(\vec
  x_t|\vec\obs_{1:T})$. To determine $\prob(\vec
  x_{t-1}|\vec\obs_{1:T})$, we multiply the Gaussian in
  Eq.~(\ref{eq:N(Rx) is N(x)}) with the smoothed Gaussian state
  distribution %$\prob(\vec x_t|\vec\obs_{1:T})\approx
$\gaussx{\vec
    x_t}{\vec\mu_{t|T}^x}{\mat \Sigma_{t|T}^x}$, which yields the
  Gaussian approximation 
\begin{align}
%  \prob(\vec x_{t-1},\vec x_t|\vec\obs_{1:T}) &\approx 
c_1\gaussx{\vec
    x_t}{\vec a}{\mat A}\gaussx{\vec
    x_t}{\vec\mu_{t|T}^x}{\mat \Sigma_{t|T}^x} %\nonumber\\
  &= c_1c_2(\vec a)\gaussx{\vec x_t}{\vec b}{\mat B}
\label{eq:density mult backward}
\end{align}
of  $\prob(\vec x_{t-1},\vec x_t|\vec\obs_{1:T})$,
for some $\vec b$, $\mat B$, where $c_2(\vec a)$ is the inverse
normalization constant of $\gaussx{\vec x_t}{\vec b}{\mat B}$.

\item[(d)] Solve the integral in Eq.~\eqref{eq:smoothing
    recursion}. Since we integrate over $\vec x_t$ in
  Eq.~(\ref{eq:smoothing recursion}), we are solely interested in the
  parts that make Eq.~(\ref{eq:density mult backward}) unnormalized,
  i.e., the constants $c_1$ and $c_2(\vec a)$, which are independent
  of $\vec x_t$. The constant $c_2(\vec a)$ in Eq.~(\ref{eq:density
    mult backward}) can be rewritten as $c_2(\vec x_{t-1})$ by
  reversing the step that inverted the matrix $\mat J_{t-1}$, see
  Eq.~(\ref{eq:N(Rx) is N(x)}). Then, $c_2(\vec x_{t-1})$ is given by
\begin{align}
&\hspace{-2mm}c_2(\vec x_{t-1}) =c_1\inv\gaussx{\vec x_{t-1}}{\vec\mu_{t-1|T}^x}{\mat
\Sigma_{t-1|T}^x}\,,
\label{eq:moments of smoothed distribution}\\
&\hspace{-2mm}\vec\mu_{t-1|T}^x=\vec\mu_{t-1|t-1}^x + \mat J_{t-1}(\vec\mu_{t|T}^x
-\vec\mu_{t|t-1}^x)\,,
\label{eq:mean smoother}\\
&\hspace{-2mm}\mat \Sigma_{t-1|T}^x
=\mat \Sigma_{t-1|t-1}^x + \mat J_{t-1}
(\mat \Sigma_{t|T}^x -
\mat \Sigma_{t|t-1}^x)\mat J_{t-1}\T\,.\label{eq:cov smoother}
\end{align}
Since $c_1c_1\inv=1$ (plug Eq.~\eqref{eq:moments of smoothed
  distribution} into Eq.~\eqref{eq:density mult backward}), the
desired smoothed state distribution is
\begin{equation}\label{eq:smoothing result}
\prob(\vec
x_{t-1}|\vec\obs_{1:T})=\gaussx{\vec
x_{t-1}}{\vec\mu_{t-1|T}^x}{\mat \Sigma_{t-1|T}^x}\,,
\end{equation}
where the mean and the covariance are given in Eq.~(\ref{eq:mean
  smoother}) and Eq.~(\ref{eq:cov smoother}), respectively.
\end{enumerate}
This result concludes the proof of Prop.~\ref{prop:smoothing}.
\end{proof}

\subsection{Sufficient Conditions for Smoothing}
After filtering, to determine a Gaussian approximation to the
distribution $\prob(\vec x_{t-1}|\vec\obs_{1:T})$ of the smoothed
state at time $t-1$, only a few additional ingredients are required:
the matrix $\mat J_{t-1}$ in Eq.~\eqref{eq:J-matrix smoothing} and
Gaussian approximations to the smoothed state distribution $\prob(\vec
x_{t}|\vec\obs_{1:T})$ at time $t$ and the predictive distribution
$\prob(\vec x_t|\vec\obs_{1:t-1})$. Everything but the matrix $\mat
J_{t-1}$ can be precomputed either during filtering or in a previous
step of the smoothing recursion. Note that $\mat J_{t-1}$ can also be
precomputed during filtering.

Hence, for Gaussian RTS smoothing it is sufficient to determine
Gaussian approximations to both the joint distribution $\prob(\vec
x_t, \vec\obs_t|\vec\obs_{1:t-1})$ of the state and the measurement
for the filter step and the joint distribution $\prob(\vec
x_{t-1},\vec x_t|\vec\obs_{1:t-1})$ of two consecutive states.

%%%%%%%%%%%%%%%%%%%%%%%%%%%%%%%%%%%%%%%%%%%%%%%%%%%%%%%%%%%%%%%%%%%

\section{Implications and Theoretical Results}
\label{sec:results}
%
%\subsection{Implications}
\label{sec:implications}
% do smoothing when getting the joint
Using the results from Secs.~\ref{sec:filtering}
and~\ref{sec:smoothing}, we conclude that for filtering and RTS
smoothing it is sufficient to compute or estimate the means and the
covariances of the joint distribution $\prob(\vec x_{t-1},\vec
x_t|\vec\obs_{1:t-1})$ between two consecutive states (smoothing) and
the joint distribution $\prob(\vec x_t,\vec\obs_t|\vec\obs_{1:t-1})$
between a state and the subsequent measurement (filtering and
smoothing).
%
% come up with new models
This result has two implications: 
\begin{enumerate}
\item Gaussian filters\slash smoothers can be distinguished by
  their approximations to these joint distributions.
\item If there exists an algorithm to compute or to estimate the means and
  the covariances of the joint distributions $\prob(\vec x,h(\vec
  x))$, where $h\in\{f,g\}$, the algorithm can be used for filtering
  and RTS smoothing.
\end{enumerate}

In the following, we first consider common filtering and smoothing
algorithms and describe how they compute Gaussian approximations to
the joint distributions $\prob(\vec x_{t-1},\vec
x_t|\vec\obs_{1:t-1})$ and $\prob(\vec
x_t,\vec\obs_t|\vec\obs_{1:t-1})$, respectively, which emphasizes the
first implication (Sec.~\ref{sec:example realizations}).  After that,
for the second implication of our results, we take an algorithm for
estimating means and covariances of joint distributions and turn this
algorithm into a filter/smoother (Sec.~\ref{sec:gibbs filter}).

\subsection{Current Algorithms for Computing the Joint
  Distributions}
\label{sec:example realizations}
\begin{table}
  \caption{Computing the means and the covariances of 
    $p(\vec x_t,\vec\obs_t|\vec\obs_{1:t-1})$ and $p(\vec x_{t-1},\vec
    x_t|\vec\obs_{1:t-1})$.} 
\vspace{2mm}
\label{tab:joint}
\centering
\scalebox{0.95}{
\begin{tabular}{l||l|l|l}
  & Kalman filter/smoother &  EKF/EKS & UKF/URTSS and CKF/CKS$^\star$\\
  \hline
  \hline
  $\hat{\vec\mu}_{t|t-1}^x$ 
  & $\vec F\vec{\mu}_{t-1|t-1}^x$
  & $\tilde{\vec F}\hat{\vec\mu}_{t-1|t-1}^x$
  & $\sum_{i=0}^{2D}w_m^{(i)}f(\vec X_{t-1|t-1}^{(i)})$
  \\
  $\hat{\vec\mu}_{t|t-1}^z$
  & $\vec G\vec\mu_{t|t-1}^x$
  & $\tilde{\vec G}\hat{\vec\mu}_{t|t-1}^x$
  &$\sum_{i=0}^{2D}w_m^{(i)}g(\vec X_{t|t-1}^{(i)})$\\
  \hline
  $\hat{\vec\Sigma}_{t|t-1}^x$ & $\vec F\vec\Sigma_{t-1|t-1}^x\vec F^\top+\vec Q$
  & $\tilde{\vec F}\hat{\vec\Sigma}_{t-1|t-1}^x\tilde{\vec F}^\top+\vec Q$
  & $\sum_{i = 0}^{2D} w_c^{(i)}(f(\vec X_{t-1|t-1}^{(i)}) - \vec\mu_{t|t-1}^x)^2 + \vec Q$\\
  $\hat{\vec\Sigma}_{t|t-1}^z$  &  $\vec G\vec\Sigma_{t|t-1}^x\vec G^\top + \vec R$
  & $\tilde{\vec G}\hat{\vec\Sigma}_{t|t-1}^x\tilde{\vec G}^\top + \vec R$
  & $\sum_{i = 0}^{2D} w_c^{(i)}(g( \vec X_{t|t-1}^{(i)}) - \vec\mu_{t|t-1}^z)^2 + \vec R$\\
  \hline
  $\hat{\vec\Sigma}_{t|t-1}^{xz}$ & $\vec\Sigma_{t|t-1}^x\vec G^\top$  
  & $\hat{\vec\Sigma}_{t|t-1}^x\tilde{\vec G}^\top$
  &$\sum_{i = 0}^{2D} w_c^{(i)}( \vec X_{t|t-1}^{(i)} -  \vec
  \mu_{t|t-1}^x)( g(\vec X_{t|t-1}^{(i)}) 
  -  \vec \mu_{t|t-1}^z)^{\top}$ \\
  $\hat{\vec\Sigma}_{t-1,t|t}^x$ &  $\vec\Sigma_{t-1|t-1}^x\vec F^\top$ &
  $\hat{\vec\Sigma}_{t-1|t-1}^x\tilde{\vec F}^\top$
  & $\sum_{i = 0}^{2D} w_c^{(i)}(\vec X_{t-1|t-1}^{(i)} - 
  \vec\mu_{t-1|t-1}^x)(f(\vec X_{t|t-1}^{(i)}) - \vec\mu_{t|t-1}^x)^\top$
\end{tabular}
}
\end{table}
Tab.~\ref{tab:joint} gives an overview of how the Kalman filter, the
EKF, the UKF, and the CKF represent the means and the
(cross-)covariances of the joint distributions $\prob(\vec
x_t,\vec\obs_t|\vec\obs_{1:t-1})$ and $\prob(\vec x_{t-1},\vec
x_t|\vec\obs_{1:t-1})$. In Tab.~\ref{tab:joint}, we use the shorthand
notation $\vec a^2\coloneqq \vec a\vec a\T$. For example, we defined
$(f(\mat X_{t-1|t-1}^{(i)}) - \vec\mu_{t|t-1}^x)^2\coloneqq(f(\mat
X_{t-1|t-1}^{(i)}) - \vec\mu_{t|t-1}^x) (f(\mat
X_{t-1|t-1}^{(i)})-\vec\mu_{t|t-1}^x)\T$.

In the Kalman filter, the transition function $f$ and the measurement
function are linear and represented by the matrices $\mat F$ and $\mat
G$, respectively. The EKF linearizes $f$ and $g$ resulting in the
matrices $\tilde{\mat F}$ and $\tilde{\mat G}$, respectively. The UKF
computes $2D+1$ sigma points $\mat X$ and uses their mappings through
$f$ and $g$ to compute the desired moments, where $w_m$ and $w_c$ are
the weights used for computing the mean and the covariance,
respectively (see~\cite{Thrun2005}, pp. 65). The CKF computations are
nearly equivalent to the UKF's computations with slight modifications:
First, the CKF only requires $2D$ cubature points $\mat X$. The
cubature points are chosen as the intersection of a $D$-dimensional
unit sphere with the coordinate system.  Thus, the sums run from 1 to
$2D$. Second, the weights $w_c=1/D=w_m$ are all
equal~\cite{Arasaratnam2009}.

Although none of these algorithms computes the joint distributions
$\prob(\vec x_t,\vec\obs_t|\vec\obs_{1:t-1})$ and $\prob(\vec
x_{t-1},\vec x_t|\vec\obs_{1:t-1})$ explicitly, they all do so
implicitly.  Using the means and covariances in Fig.~\ref{tab:joint}
in the filtering and smoothing Eqs.~(\ref{eq:proposition filter
  mean}),~(\ref{eq:proposition filter
  covariance}),~(\ref{eq:proposition mean smoother}),
and~(\ref{eq:proposition cov smoother}), the results from the original
papers~\cite{Kalman1960,Rauch1965, Maybeck1979, Julier2004,
  Sarkka2008,Arasaratnam2009} are recovered. To the best of our
knowledge, Tab.~\ref{tab:joint} is the first presentation of the CKS.

\subsection{Gibbs-Filter and Gibbs-RTS Smoother}
\label{sec:gibbs filter}

We now derive a Gaussian filter and RTS smoother based on Gibbs
sampling~\cite{Geman1984}.  Gibbs sampling is an example of a Markov
Chain Monte Carlo (MCMC) algorithm and often used to infer the
parameters of the distribution of a given data set.  In the context of
filtering and RTS smoothing, we use Gibbs sampling for inferring the
mean and the covariance of the distributions $\prob(\vec x_{t-1},\vec
x_t|\vec\obs_{1:t-1})$ and $\prob(\vec
x_t,\vec\obs_t|\vec\obs_{1:t-1})$, respectively, which is sufficient
for Gaussian filtering and RTS smoothing, see
Sec.~\ref{sec:implications}.

%% Gibbs smoother
\begin{algorithm}[tb]
\caption{Gibbs-RTSS}
\label{alg:gibbs-rtss}
\begin{algorithmic}[1]
\State \textbf{init:} $\prob(x_0),\mat Q,\mat R,f ,g$
\Comment{initializations}
\For{$t=1$ \textbf{to} $T$}
\Comment{forward sweep (Gibbs-filter)}
\State infer moments of $\prob(\vec x_{t-1},\vec
x_t|\vec\obs_{1:t-1})$ 
\Comment{$\approx$ alg.~\ref{alg:Gibbs}}
\State infer moments of $\prob(\vec
x_t,\vec\obs_t|\vec\obs_{1:t-1})$
\Comment{alg.~\ref{alg:Gibbs}}
\State measure $\vec\obs_t$
\State compute $\vec\mu_{t|t}^x,\mat\Sigma_{t|t}^x,\mat J_{t-1}$
\Comment{Eqs.~(\ref{eq:generic filter mean}),~(\ref{eq:generic filter covariance}),~(\ref{eq:J-matrix smoothing})}
\EndFor
\For{$t = T$ \textbf{to} $1$}
\Comment{backward sweep}
\State compute $\vec\mu_{t-1|T}^x,\mat\Sigma_{t-1|T}^x$
\Comment{Eqs.~(\ref{eq:mean smoother}),~(\ref{eq:cov smoother})}
\EndFor
\end{algorithmic}
\end{algorithm}

% high-level ideas
Alg.~\ref{alg:gibbs-rtss} details the high-level steps of the
Gibbs-RTSS. 

\begin{figure}[tb]
\centering
\includegraphics[height = 3.02cm]{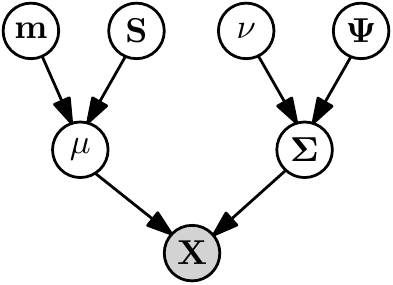}
\label{fig:gibbs filter}
\caption{Graphical model for the Gibbs-filter\slash
  RTSS. $\mat X$ are data from
  a joint distribution, $\vec\mu$ and $\mat\Sigma$ are the mean and
  the covariance of $\mat X$. The parameters of the conjugate
  priors on the mean and the covariance are denoted by $\vec
  m,\mat S$ and $\mat\Psi,\nu$, respectively.}
\end{figure}
At each time step, we use Gibbs sampling to infer the moments of the
joint distributions $\prob(\vec x_{t-1},\vec x_t|\vec\obs_{1:t-1})$
and $\prob(\vec x_t,\vec\obs_t|\vec\obs_{1:t-1})$. Fig.~\ref{fig:gibbs
  filter} shows the graphical model for inferring the mean $\vec\mu$
and the covariance $\mat\Sigma$ from the joint data set $\mat X$ using
Gibbs sampling. The parameters of the conjugate priors on the
mean $\vec\mu$ and the covariance $\mat\Sigma$ are denoted by $\vec
m,\mat S$ and $\mat\Psi,\nu$, respectively.

To infer the moments of the joint $\prob(\vec x_{t-1},\vec
x_t|\vec\obs_{1:t-1})$, we first generate i.i.d. samples from the
filter distribution $\prob(\vec x_{t-1}|\vec\obs_{1:t-1})$ and map
them through the transition function $f$. The samples and their
mappings serve as samples $\mat X$ from the joint distribution
$\prob(\vec x_{t-1},\vec x_t|\vec\obs_{1:t-1})$. With a conjugate
Gaussian prior $\gaussx{\vec\mu}{\vec m}{\mat S}$ on the joint mean,
and a conjugate inverse Wishart prior distribution
$\mathcal{IW}(\mat\Sigma|\mat\Psi,\nu)$ on the joint covariance
matrix, we infer the posterior distributions on $\vec\mu$ and
$\mat\Sigma$.  By sampling from these posterior distributions, we
obtain unbiased estimates of the desired mean and the covariance of
the joint $\prob(\vec x_{t-1},\vec x_t|\vec\obs_{1:t-1})$ as the
sample average (after a burn in).

To infer the mean and the covariance of the joint $\prob(\vec
x_t,\vec\obs_t|\vec\obs_{1:t-1})$, we proceed similarly: We generate
i.i.d. samples from the distribution $\prob(\vec
x_t|\vec\obs_{1:t-1})$, which are subsequently mapped through the
measurement function. The combined data set of i.i.d. samples and
their mappings define the joint data set $\mat X$. Again, we choose a
conjugate Gaussian prior on the mean vector and a conjugate inverse
Wishart prior on the covariance matrix of the joint $\prob(\vec
x_t,\vec\obs_t|\vec\obs_{1:t-1})$. Using Gibbs sampling, we sample
means and covariances from the posteriors and obtain unbiased
estimates for the mean and the covariance of the joint $\prob(\vec
x_t,\vec\obs_t|\vec\obs_{1:t-1})$.

%% getting the joint
Alg.~\ref{alg:Gibbs} outlines the steps for computing the joint
distribution $\prob(\vec x_t,\vec\obs_t|\vec\obs_{1:t-1})$.
\begin{algorithm}[tb]
  \caption{Inferring the mean $\vec\mu_{x,z}$ and the covariance
    $\mat\Sigma_{x,z}$ of $\prob(\vec
    x_t,\vec\obs_t|\vec\obs_{1:t-1})$ using Gibbs
    sampling}\label{alg:Gibbs}
\begin{algorithmic}[1]
  \State pass in marginal distribution $\prob(\vec
  x_{t}|\vec\obs_{1:t-1})$, burn-in period $B$, number $L$ of Gibbs
  iterations, size $N$ of data set

  \State init. conjugate priors on joint mean and covariance
  $\gaussx{\vec\mu_{x,\obs}}{\vec m}{\mat S}$ and
  $\mathcal{IW}(\mat\Sigma_{x,\obs}|\mat\Psi,\nu)$

  \State $\mat X\coloneqq [\vec x_{t}\idx{i}, g(\vec
  x_{t}\idx{i})+\vec v_t\idx{i}]_{i=1}^N$
  \Comment generate joint data set

  \State sample $\vec\mu_1\sim\gauss{\vec m}{\mat S}$ 

  \State sample $\mat\Sigma_1\sim\mathcal IW(\mat\Psi,\nu)$ 

  \For{$j=1$ \textbf{to} $L$}
  \Comment{for $L$ Gibbs iterations do}

  \State update $\vec m|\mat X,\vec\mu_j,\mat\Sigma_j$ 
  \Comment posterior parameter (mean) of $\prob(\vec\mu_j)$

  \State update $\mat S|\mat X,\vec\mu_j,\mat\Sigma_j$ 
  \Comment posterior parameter (covariance) of
  $\prob(\vec\mu_j)$

  \State sample $\vec\mu_{j+1}\sim\gauss{\vec m}{\mat S}$ 
  \Comment sample mean of the joint

  \State update $\mat \Psi|\mat X,\vec\mu_{j+1},\mat\Sigma_j$ 
  \Comment posterior hyper-parameter (scale matrix) of
  $\prob(\vec\Sigma_{j})$

  \State update $\nu |\mat X,\vec\mu_{j+1},\mat\Sigma_{j}$ 
  \Comment posterior hyper-parameter (degrees of freedom) of
  $\prob(\vec\Sigma_{j})$

  \State sample $\mat\Sigma_{j+1}\sim\mathcal{IW}(\mat \Psi,\nu)$ 
  \Comment sample covariance of the joint
\EndFor

\State $\vec\mu_{x,\obs}\coloneqq \E[\vec\mu_{B+1:L}]$
\Comment{unbiased estimate of the mean of the joint distribution}

\State $\mat\Sigma_{x,\obs}\coloneqq \E[\mat\Sigma_{B+1:L}]$
\Comment unbiased estimate of the covariance of the joint distribution

\State \Return $\vec\mu_{x,\obs},\mat\Sigma_{x,\obs}$
\Comment{return inferred mean and covariance of the joint}
\end{algorithmic}
\end{algorithm}
Since the chosen priors for the mean and the covariance are conjugate
priors, all updates of the posterior hyper-parameters can be computed
analytically~\cite{Gilks1996}.

The moments of $\prob(\vec x_{t-1},\vec x_t|\vec\obs_{1:t-1})$, which
are required for smoothing, are computed similarly by exchanging the
pass-in distributions and the mapping function.

%%%%%%%%%%%%%%%%%%%%%%%%%%%%%%%%%%%%%%%%%%%%%%%%%%%%%%%%%%%%%%%%%%%
\section{Numerical Evaluation}
\label{sec:numerical evaluation}
As a proof of concept, we show that the Gibbs-RTSS proposed in
Sec.~\ref{sec:gibbs filter} performs well in linear and nonlinear
systems. As performance measures, we consider the expected
root mean square error (RMSE) and the expected negative log-likelihood
(NLL) per data point in the trajectory. 
For a single trajectory, the
NLL is given by
\begin{align}
  \text{NLL} &= -\frac{1}{T+1}\sum_{t=0}^T \log
  \mathcal N(x_t^{\text{truth}}|\mu_{t|\tau}^x,(\sigma_{t|\tau}^x)^2)\,,
\end{align}
where $\tau=t$ for filtering and $\tau=T$ for smoothing.  While the
RMSE solely penalizes the distance of the true state and the mean of
the filtering\slash smoothing distribution, the NLL measures the
coherence of the filtering\slash smoothing distributions, i.e., the
NLL values are high if $x_t^{\text{truth}}$ is an unlikely observation
under $\prob(x_t|\mu_{t|\tau}^x,(\sigma_{t|\tau}^x)^2)$,
$\tau\in\{t,T\}$.  In our experiments, we chose a time horizon $T=50$.

\subsection{Proof of Concept: Linear System}
First, we tested the performance of the Gibbs-filter/RTSS in the
linear system 
\begin{align}
x_{t} &= x_{t-1} + w_t\\
z_t &= -2\,x_t + v_t\,, 
\end{align}
where $w_t\sim\mathcal N(0,1), v_t\sim\mathcal N(0,10), p(x_0) =
\mathcal N(0,5)$. In a linear system, the (E)KF is optimal and
unbiased~\cite{Anderson2005}. The Gibbs-filter/RTSS perform as well as
the EKF/EKS as shown in Fig.~\ref{tab:results linear}, which shows the
expected performances (with the corresponding standard errors) of the
filters/smoothers over 100 independent runs, where
$x_0\sim\prob(x_0)$. The Gibbs-sampler parameters were set to
$(N,L,B)=(1000,200,100)$, Alg.~\ref{alg:Gibbs}.
\begin{figure}[tb]
\centering
\scalebox{1}{
\begin{tabular}{c|cc|cc}
& EKF & Gibbs-filter$^\star$  & EKS & Gibbs-RTSS$^\star$\\
\hline
RMSE 
& $1.11\pm 0.014$ % EKF
& $1.12\pm 0.014$ % Gibbs-filter
& $0.88\pm 0.011$ % EKS
& $0.89\pm 0.011$ \\ % Gibbs-smoother
NLL &  
$1.52\pm 0.012$ % EKF
& $1.52\pm 0.012$ % Gibbs-filter 
& $1.30\pm0.013$ % EKS
& $1.30\pm 0.012$ % Gibbs-smoother
\end{tabular}
}
\caption{Expected performances (\emph{linear} system) with standard
  error of the mean. The results obtained from the optimal linear
  algorithm and the Gibbs-filter\slash RTSS are nearly identical.}
  \label{tab:results linear} 
\end{figure}

\subsection{Nonlinear System: Non-stationary Growth Model}
As a nonlinear example, we consider the dynamic system
\begin{align}
  x_t &=  \tfrac{x_{t-1}}{2} + \tfrac{25x_{t-1}}{1+x_{t-1}^2} +
  8\cos(1.2\,(t-1)) + w_t\,, \label{eq:nonlinear system}\\
  z_t & = \tfrac{x_t^2}{20} + v_t\label{eq:nonlinear meas}\,,
\end{align}
with exactly the same setup as in~\cite{Doucet2000}: $w_t\sim\mathcal
N(0,1)$, $v_t\sim\mathcal N(0,10)$, and $p(x_0) = \mathcal
N(x_0|0,5)$. This system is challenging for Gaussian filters due to its
quadratic measurement equation and its highly nonlinear system equation.

We run the Gibbs-RTSS, the EKS, the CKS, and the
URTSS~\cite{Sarkka2008} for comparison. We chose the Gibbs parameters
$(N,L,B)=(1000,200,100)$. For 100 independent runs starting from
$x_0\sim\prob(x_0)$, we report the expected RMSE and NLL performance
measures in Fig.~\ref{tab:results nonlinear}.
\begin{figure}[tb]
  \centering
  \scalebox{1}{
\begin{tabular}{c|cccc}
  filters & Gibbs-filter$^\star$ & EKF & CKF & UKF\\
  \hline
  RMSE & $\mathbf{5.04\pm 0.088}$ & $11.1\pm 0.29$ & $6.18\pm 0.17$ & $8.57 \pm 0.16$ \\ 
  NLL & $\mathbf{2.87\pm 0.12}$ &  $26.1 \pm 1.18$ & $9.96\pm 0.75$ & $13.6\pm 0.68$\\
  \hline
  \hline
  smoothers & Gibbs-RTSS$^\star$ & EKS & CKS$^\star$ & URTSS\\
  \hline
  RMSE  & $\mathbf{4.01\pm 0.085}$ & $10.6\pm 0.28$  & $5.66\pm 0.20$ & $8.02\pm 0.16$ \\
  NLL   & $\green{2.78\pm 0.15}$  & $\red{90.6\pm 10.3}$   & $\red{28.9\pm 3.31}$ & $\red{16.3\pm 0.16}$ 
\end{tabular}
}
 \caption{Expected performances (\emph{nonlinear} system) with
   standard error of the mean. The Gibbs-RTSS is the only coherent
   smoother, i.e., it improves the filtering results in the NLL measure.}
\label{tab:results nonlinear} 
\vspace{-4mm}
\end{figure}

Both high expected NLL-values and the fact that smoothing makes them
even higher hint at the incoherencies of the EKF/EKS, the CKF/CKS, and
the UKF/URTSS. The Gibbs-RTSS was the only considered smoother that
consistently improved the results of the filtering step. Therefore, we
conclude that the Gibbs-filter/RTSS is coherent.

% figure with example trajectories
Fig.~\ref{fig:results} shows example realizations of filtering and
smoothing using the Gibbs-filter\slash RTSS, the EKF\slash EKS, the
CKF\slash CKS, and the UKF\slash URTSS, respectively.
\begin{figure}[tb]
  \centering 
\subfigure[Gibbs-filter (Gibbs-RTSS). RMSE: 5.56 (4.18),
  \green{NLL: 2.65 (2.45)}.]{
\includegraphics[width = 0.48\hsize]{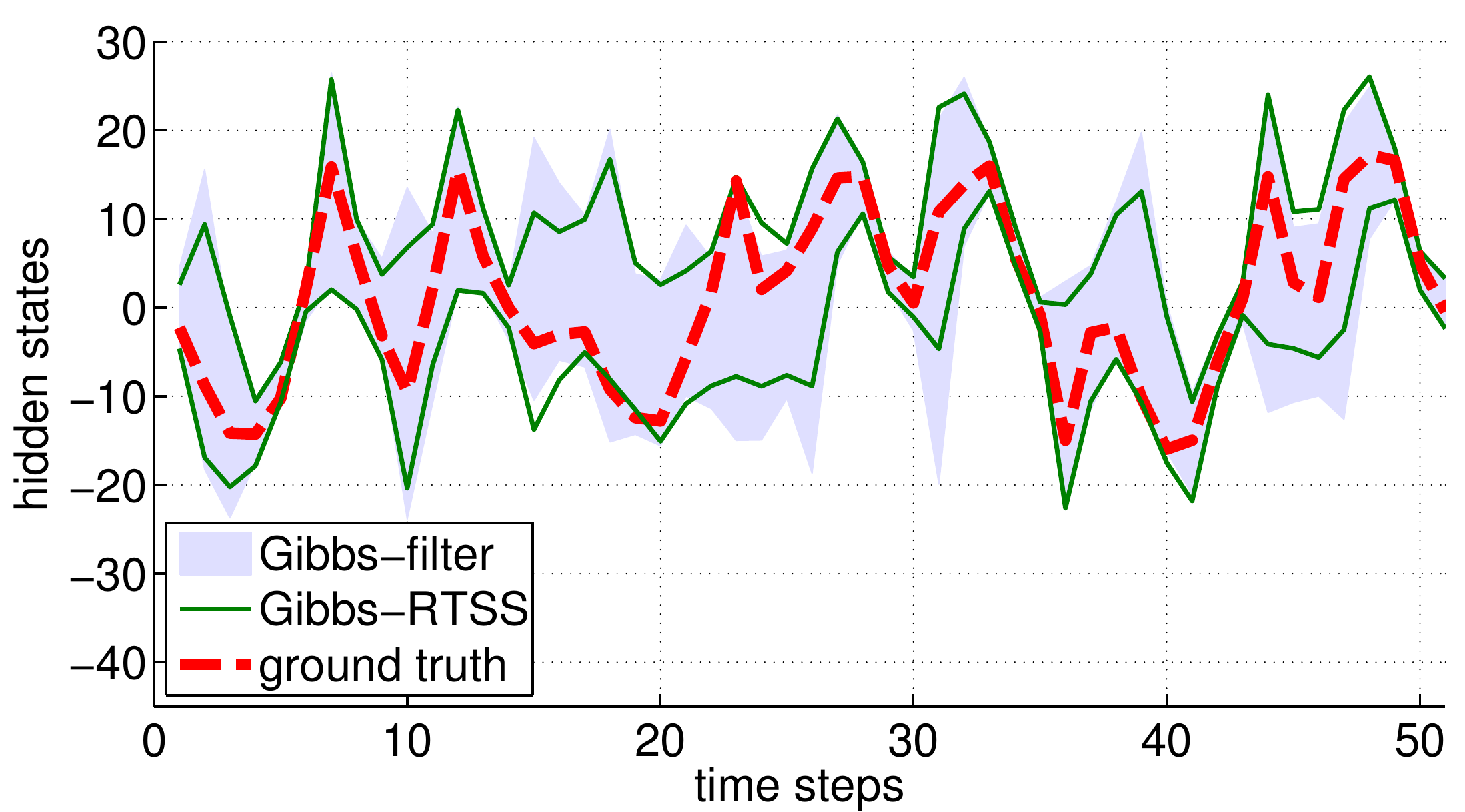}
\label{fig:example gibbs smoother}
}
\hfill
\subfigure[EKF (EKS). RMSE: 11.3 (14.7), \red{NLL: 16.5 (20.8)}.]{
\includegraphics[width = 0.45\hsize]{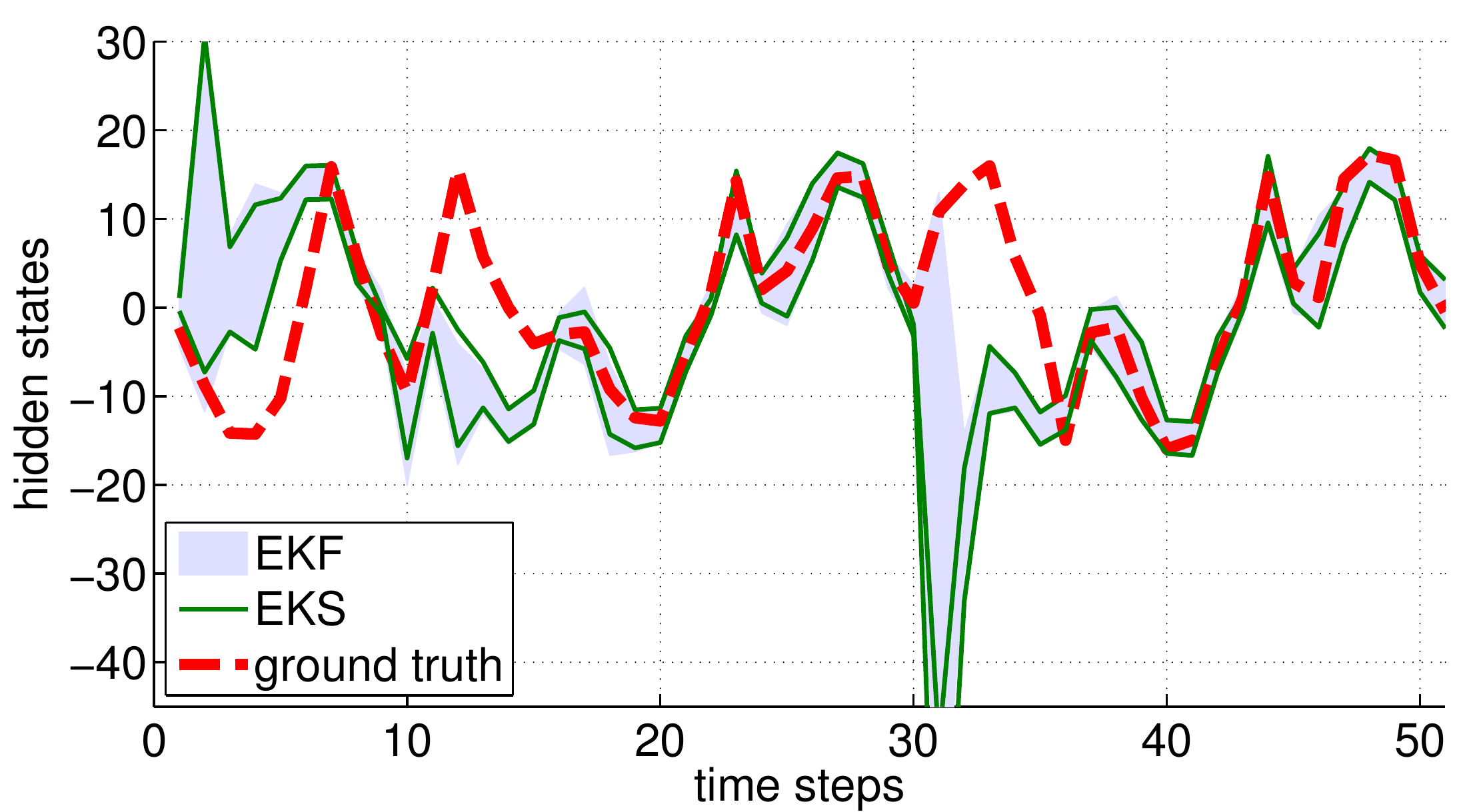}
\label{fig:example eks}
}
\subfigure[CKF (CKS). RMSE: 5.66 (5.96), \red{NLL: 7.32 (20.7)}.]{
\includegraphics[width = 0.48\hsize]{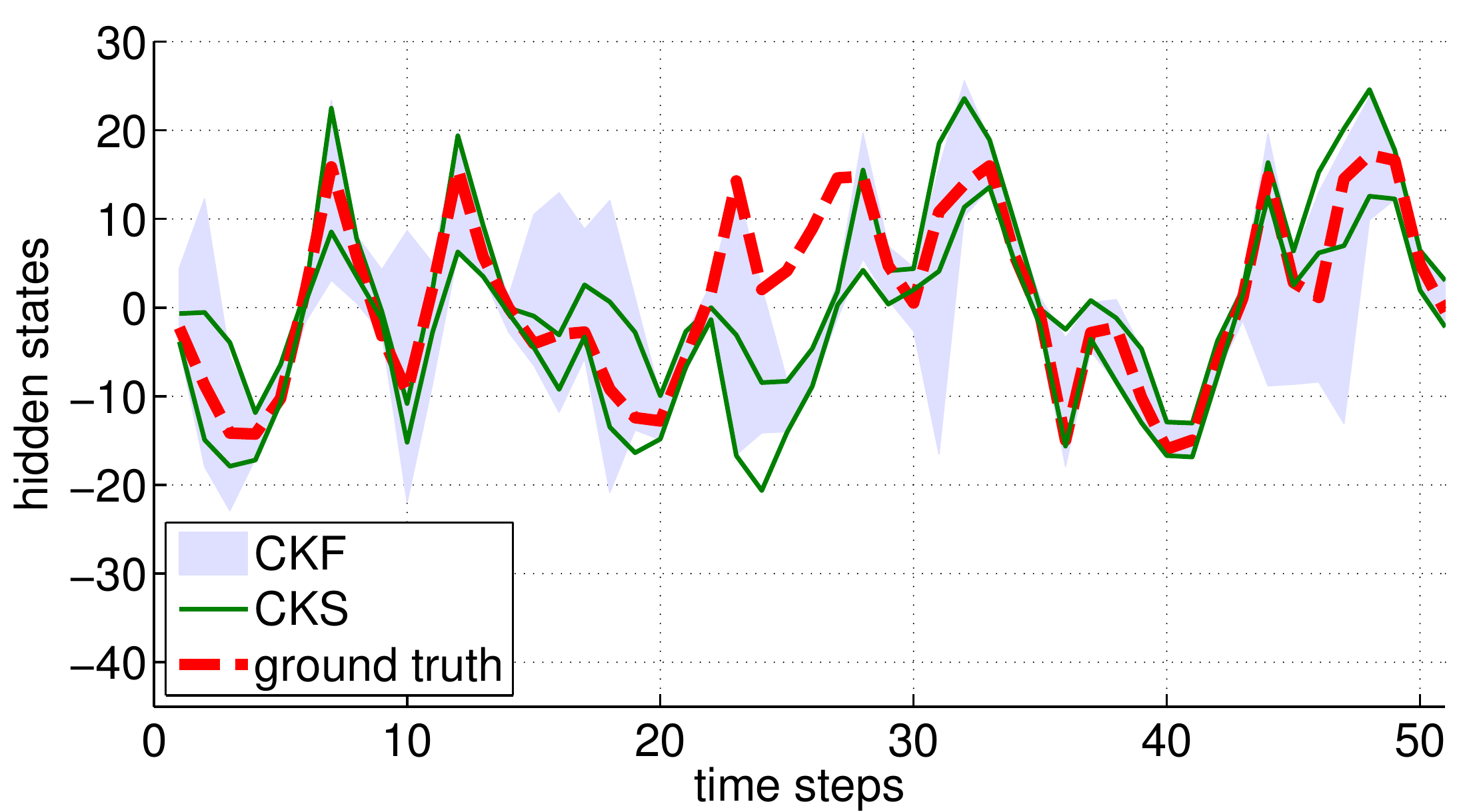}
\label{fig:example cks}
}
\hfill
\subfigure[UKF (URTSS). RMSE: 7.87 (7.18), \red{NLL: 8.66 (9.93)}.]{
\includegraphics[width = 0.45\hsize]{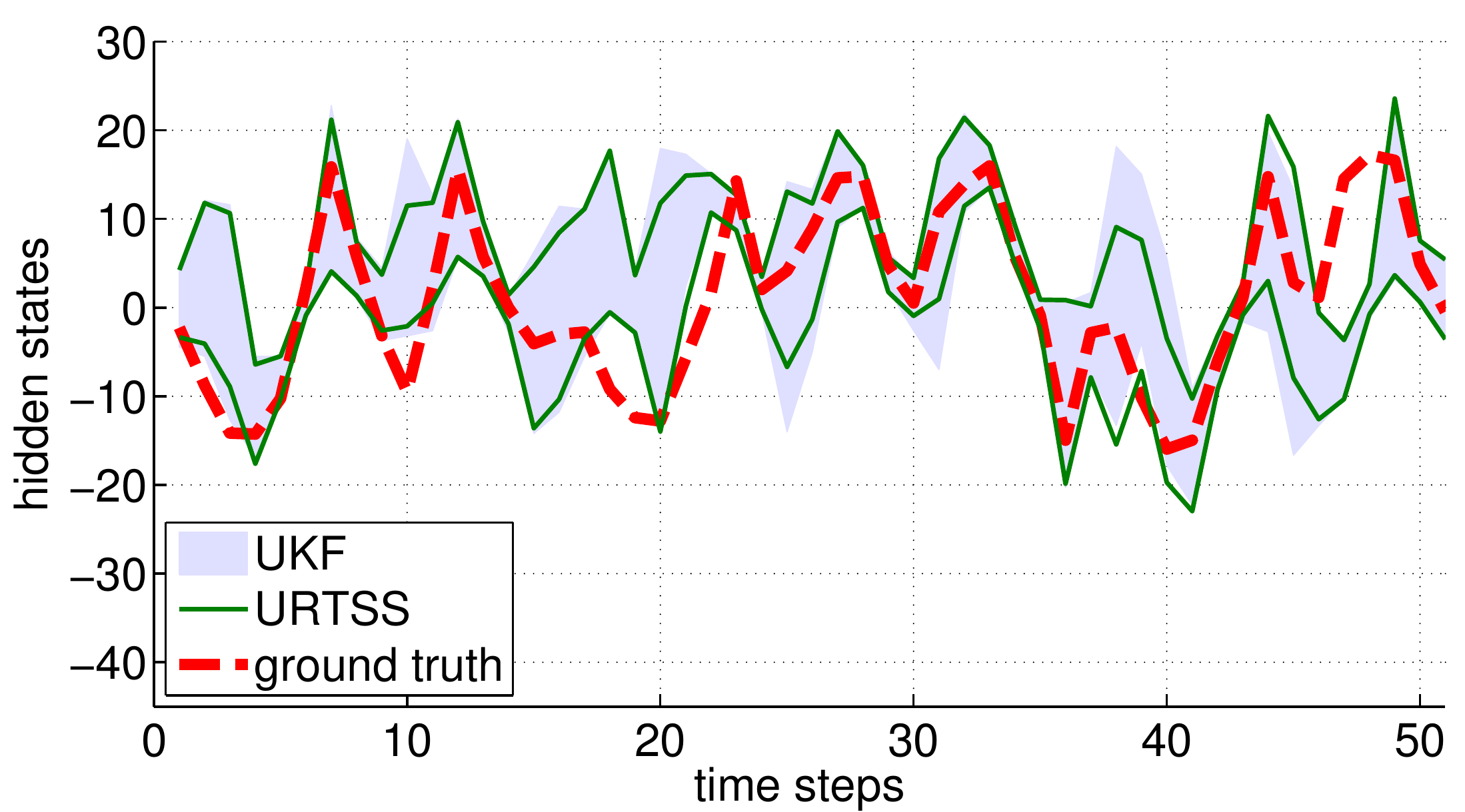}
\label{fig:example urtss}
}
\caption{Example trajectories of filtering\slash smoothing in the
  nonlinear growth model using \subref{fig:example gibbs smoother}
  Gibbs-RTSS, \subref{fig:example eks} EKS, \subref{fig:example cks}
  CKS, \subref{fig:example urtss} URTSS. The filter distributions are
  represented by the shaded areas ($95\%$ confidence area), the
  smoothing distributions are shown by solid green lines ($95\%$
  confidence area). The actual realization of the latent state is the
  dashed red graph.}
\label{fig:results}
\vspace{-4mm}
\end{figure}
The Gibbs-filter/RTSS appropriately inferred the variances of the
latent state while the other filters/smoothers did not (neither of
them is moment-preserving), which can lead to incoherent
filtering/smoothing distributions~\cite{Deisenroth2010b}, see also
Fig.~\ref{tab:results nonlinear}.

\section{Discussion}
\label{sec:discussion}
% related work on Gibbs filtering
Our Gibbs-filter/RTSS differs from~\cite{Carter1994}, where Gibbs
sampling is used to infer the noise in a linear system. Instead, we
infer the means and covariances of the full joint distributions
$\prob(\vec x_{t-1},\vec x_t|\vec\obs_{1:t-1})$ and $\prob(\vec
x_t,\vec\obs_t|\vec\obs_{1:t-1})$ in nonlinear systems from
data. Neither the Gibbs-filter nor the Gibbs-RTSS require to know the
noise matrices $\mat R,\mat Q$, but they can be inferred as a part of
the joint distributions if access to the dynamic system is
given. Unlike the Gaussian particle filter~\cite{Kotecha2003}, the
proposed Gibbs-filter is not a particle filter. Therefore, it does not
suffer from degeneracy due to importance
sampling.

% use Gibbs stuff as close approximation to optimal ADF
Although the Gibbs-filter is computationally more involved than the
EKF/UKF/CKF, it can be used as a baseline method to evaluate the
accuracy and coherence of more efficient algorithms: When using
sufficiently many samples the Gibbs-filter can be considered a close
approximation to a moment-preserving filter in nonlinear stochastic
systems.

% bigger joint
The sampling approach to inferring the means and covariances of two
joint distributions proposed in this paper can be extended to infer
the means and covariances of a single joint, namely, $\prob(\vec
x_{t-1},\vec x_t,\vec\obs_t|\vec\obs_{1:t-1})$. This would increase
the dimensionality of the parameters to be inferred, but it would
remove slight inconsistencies that appear in the present approach:
Ideally, the marginals $\prob(\vec x_t|\vec\obs_{1:t-1})$, i.e., the
time update, which can be obtained from both joints $\prob(\vec
x_{t-1},\vec x_t|\vec\obs_{1:t-1})$ and $\prob(\vec
x_t,\vec\obs_t|\vec\obs_{1:t-1})$ are identical. Due to the finite
number of samples, small errors are introduced. In our experiments,
they were small, i.e., the relative difference error was smaller than
$10^{-5}$. Using the joint $\prob(\vec x_{t-1},\vec
x_t,\vec\obs_t|\vec\obs_{1:t-1})$ would avoid this kind of error.

% only need function evaluation
The Gibbs-filter\slash RTSS only need to be able to evaluation the
system and measurement functions. No further requirements such as
differentiability are needed.
% other MCMC filters
A similar procedure for MCMC-based smoothing is applicable when,
instead of Gibbs sampling, slice sampling~\cite{Neal2003} or
elliptical slice sampling~\cite{Murray2010} is used, potentially
combined with GPs that model the functions $f$ and $g$.

% code is available
The Gibbs-RTSS code is publicly available at
\url{mloss.org}. 

% GP filters
In the context of Gaussian process dynamic systems, the GP-EKF, the
GP-UKF~\cite{Ko2009}, and the GP-ADF~\cite{Deisenroth2009a} can
directly be extended to smoothers using the results from this
paper. The GP-URTSS (smoothing extension of the GP-UKF) and the
GP-RTSS (smoothing extension of the GP-ADF) are presented
in~\cite{Deisenroth2010b}.

%%%%%%%%%%%%%%%%%%%%%%%%%%%%%%%%%%%%%%%%%%%%%%%%%%%%%%%%%%%%%%%%%%%

\section{Conclusion}
\label{sec:conclusion}

Using a general probabilistic perspective on Gaussian filtering and
smoothing, we first showed that it is sufficient to determine Gaussian
approximations to two joint probability distributions to perform
Gaussian filtering and smoothing. Computational approaches to Gaussian
filtering and Rauch-Tung-Striebel smoothing can be distinguished by
their respective methods used to determining two joint distributions.

Second, our results allow for a straightforward derivation and
implementation of novel Gaussian filtering and smoothing algorithms,
e.g., the cubature Kalman smoother.  Additionally, we presented a
filtering smoothing algorithm based on Gibbs sampling as an
example. Our experimental results show that the proposed
Gibbs-filter/Gibbs-RTSS compares well with state-of-the-art Gaussian
filters and RTS smoothers in terms of robustness and accuracy.

\subsection*{Acknowledgements}
The authors thank S. Mohamed, P. Orbanz, M. Krainin, and D. Fox for
valuable suggestions and discussions. MPD has been supported by ONR
MURI grant N00014-09-1-1052 and by Intel Labs. HO has been partially
supported by the Swedish foundation for strategic research in the
center MOVIII and by the Swedish Research Council in the Linnaeus
center CADICS.

%\bibliographystyle{plain}
%\bibliography{../../literature/literature}

\end{document}